\def\frontmatter@above@abstract{\vspace*{0pt}}
\def\frontmatter@below@abstract{\vspace*{0pt}}
\def\frontmatter@above@affiliation{\vspace*{0pt}}
\theoremstyle{plain}
\newtheorem{theorem}{Theorem}
\newtheorem{lemma}{Lemma}
\newtheorem{proposition}{Proposition}
\theoremstyle{definition}
\newtheorem{definition}{Definition}
\theoremstyle{remark}
\newtheorem{remark}{Remark}
\newcommand{\Assemblage}{\{\sigma_{a|x}\}}
\DeclareMathOperator{\tr}{Tr}
\begin{document}

\title{\textbf{EPR Revisited: Context-Indexed Elements of Reality and Operational Completeness}}

\author{Miko{\l}aj Sienicki}
\affiliation{Polish-Japanese Academy of Information Technology, ul.\ Koszykowa 86, 02-008 Warsaw, Poland, European Union}

\author{Krzysztof Sienicki}
\affiliation{Chair of Theoretical Physics of Naturally Intelligent Systems (\(\mathbb{NIS}\)\textsuperscript{\copyright}), Lipowa 2/Topolowa 19, 05-807 Podkowa Le\'sna, Poland, European Union}

\date{\today}

% ---------- Abstract ----------
\begin{abstract}
We revisit the classic Einstein--Podolsky--Rosen (EPR) argument in a way that stays close to operational language and avoids the familiar overreach: moving from perfect predictability to the claim that incompatible observables must have simultaneous, context-independent values. The basic move is to replace EPR’s “elements of reality” by \emph{context-indexed conditional states}. These conditional states form a measurement assemblage, denoted \(\Assemblage\). Our \emph{Revised Reality Criterion} (RRC) makes the intended meaning explicit: if Alice chooses a setting \(x\) and obtains outcome \(a\), then Bob’s system is steered to a conditional state \(\omega_{B|a,x}\) that guarantees a unit-probability prediction for the corresponding test---but only within that context \((x,a)\). We then call a theory \emph{operationally complete} (relative to a specified operational arena) if it can represent all physically accessible assemblages in that arena and correctly predict their statistics. Quantum theory satisfies this requirement for the set of quantum-reachable assemblages via the Born rule and standard update rules. More generally, we show that any one-sided no-signalling operational theory supports the inference of context-indexed elements from EPR-style perfect predictions, while the stronger EPR step to \emph{simultaneous, context-free predetermined values} corresponds to adding a \emph{deterministic} local-hidden-state assumption and therefore is not forced by predictability alone. To keep the discussion concrete, we include complete proofs, a qubit singlet case study (2--3 Pauli settings with CJWR steering inequalities and noise thresholds), a continuous-variable example based on finite-squeezing Reid criteria, and a PR-box--type one-sided assemblage illustrating where quantum reachability fails.
\end{abstract}

\keywords{Einstein--Podolsky--Rosen (EPR), quantum steering, assemblages, context-indexed elements of reality, operational completeness, measurement incompatibility, local-hidden-state (LHS) models, CJWR inequalities, Reid criterion, Bell nonlocality, contextuality, Contexts--Systems--Modalities (CSM), contextual objectivity}

\maketitle

% ---------- Introduction ----------
\section{Introduction}
The 1935 paper by Einstein, Podolsky, and Rosen (EPR) put a sharp tension on the table: quantum mechanics seems able to predict distant outcomes with certainty, yet it does not assign simultaneous definite values to incompatible observables. The standard EPR move is to treat perfect predictability as evidence for \emph{context-free} “elements of reality.” Our aim is to keep what is operationally solid in the EPR reasoning while avoiding that extra step. Concretely, we (i) replace the original EPR phrasing by a context-indexed criterion, (ii) work in an explicitly operational framework that connects directly to experiments, and (iii) keep the presentation compatible with standard quantum mechanics.

\paragraph{Contributions.}
(1) We introduce a \emph{Revised Reality Criterion} (RRC) that makes “elements of reality” explicitly \emph{context-indexed}. (2) We define \emph{operational completeness} relative to a fixed operational arena and show how quantum theory meets this standard on the quantum-reachable subset of assemblages. (3) We formalize the extra EPR step---\emph{simultaneous predetermined values for a family of Bob tests}---as a \emph{deterministic} local-hidden-state (LHS) requirement, and we show that it is not implied by perfect predictability. (4) We connect the discussion to experiment via steering inequalities, including a qubit case study with two and three Pauli settings and realistic noise thresholds, as well as a continuous-variable illustration at finite squeezing. (5) We use the standard equivalence between measurement incompatibility and the possibility of steering to organize the main claims.

% ---------- Related work ----------
\section{Related work}
The EPR thought experiment and Bohr’s response frame the early debate about what quantum theory does—and does not—say about physical properties \cite{EPR1935,Bohr1935}. Bell’s theorem later gave this tension experimental bite by showing that any local, hidden-variable account must satisfy inequalities that quantum correlations can violate \cite{Bell1964}. Schr\"odinger introduced the term \emph{steering} for the remote preparation aspect of entanglement \cite{Schrodinger1935}. In modern operational language, steering is formulated in terms of assemblages and local-hidden-state (LHS) models \cite{Wiseman2007}, and it is now known to be tightly connected to measurement incompatibility \cite{Uola2014,Quintino2014}. For broader overviews, see \cite{CavalcantiSkrzypczyk2017,RevModPhys.92.015001}; for an accessible entry point into incompatibility as a resource, see \cite{Heinosaari2016}.

Alongside steering-based approaches, Auff\`eves and Grangier’s Contexts--Systems--Modalities (CSM) framework develops a closely related message: definite physical properties (“modalities”) are attributed to a \emph{system in a context}, not to the system alone \cite{AuffevesGrangier2016CSM}. In a complementary direction, Grangier emphasizes that the EPR dilemma can be handled by making the context explicit in the inference step, i.e., by treating context-free value attribution as the real source of the apparent conflict \cite{Grangier2021Entropy}. Our Revised Reality Criterion fits naturally with this viewpoint: it captures the same “no context-free values” lesson, but expresses it at the operational, assemblage level, and it continues to apply in GPT (Generalized Probabilistic Theories)-style settings \cite{Barrett2007GPT,Plavala2023GPT} (including non-quantum examples such as PR boxes \cite{PopescuRohrlich1994}).

% ---------- Operational framework ----------
\section{Operational framework}\label{sec:operational}
We adopt a general, GPT-like operational framework in which quantum mechanics appears as a special case \cite{Barrett2007GPT,Plavala2023GPT}. Systems are described by state spaces, effect spaces, and instruments (measurements together with state-update rules). We consider two agents, Alice (A) and Bob (B), who share a bipartite preparation.

Let \(\rho_{AB}\) denote a joint state shared by Alice and Bob. Alice's measurement choices are labeled by \(x\), with outcomes \(a\), and her effects are written \(M_{a|x}\) (POVM elements in QM). Given an outcome \((a,x)\), Bob's \emph{normalized} conditional state is \(\omega_{B|a,x}\). The associated unnormalized state is
\begin{equation}
  \sigma_{a|x} := p(a|x)\,\omega_{B|a,x}, \qquad p(a|x) = \tr[\sigma_{a|x}], \label{eq:sigma_def}
\end{equation}
and the assemblage \(\Assemblage\) collects all \(\sigma_{a|x}\) across Alice’s settings. Bob’s effects are denoted \(\{e_{b|y}\}\); in the quantum case, these correspond to operators \(E_{b|y}\), with evaluation rule \(e(\omega)=\tr[E\,\omega]\).

For POVMs with multiple outcomes, we say there is \emph{certainty} for a fixed setting \(y\) when exactly one outcome has probability \(1\) and the others have probability \(0\). This notion is inherently \emph{per-POVM} (see \cref{def:value}). It does not rule out certainty for effects from different POVMs when the state lies in the intersection of the relevant \(1\)-eigenspaces.

Given any POVM element \(0 \le E \le I\), the supremum of \(\tr[E\rho]\) over all states \(\rho\) is the operator norm \(\lVert E \rVert_\infty \le 1\). If a density operator \(\rho\) satisfies \(\tr[E\rho]=1\), then \(\lVert E \rVert_\infty=1\) and \(\rho\) is supported on the eigenspace of \(E\) with eigenvalue \(1\) (when such an eigenspace exists). In infinite-dimensional settings, one may have \(\lVert E \rVert_\infty=1\) without an eigenvector at eigenvalue \(1\) (continuous spectrum at \(1\)); in that case, no normal state achieves unit probability.

\begin{definition}[Assemblage]\label{def:assemblage}
Suppose Alice and Bob share a bipartite preparation. When Alice performs a measurement labeled by \(x\) and obtains outcome \(a\), Bob's unnormalized conditional state is \(\sigma_{a|x} := p(a|x)\,\omega_{B|a,x}\). Operational no-signalling from Alice to Bob requires that the marginal \(\sum_a \sigma_{a|x}\) does not depend on \(x\).
\end{definition}

\begin{definition}[Local-Hidden-State (LHS) model]\label{def:LHS}
An assemblage \(\Assemblage\) admits an LHS model if there exists a hidden variable \(\lambda\) with distribution \(p(\lambda)\), response functions \(p(a|x,\lambda)\) with \(\sum_a p(a|x,\lambda)=1\), and local normalized states \(\omega_\lambda\) for Bob such that
\begin{equation}
  \sigma_{a|x} = \sum_{\lambda} p(\lambda)\,p(a|x,\lambda)\,\omega_\lambda
  \quad \forall(a,x). \label{eq:LHS}
\end{equation}
\end{definition}

\begin{definition}[Revised Reality Criterion (RRC)]\label{def:RRC}
Whenever Alice’s outcome enables a perfect, unit-probability prediction for some test on Bob’s side, we say that Bob’s system carries a \emph{context-indexed element of reality}, namely the conditional state \(\omega_{B|a,x}\) tied to the specific context \((x,a)\).
\end{definition}

\begin{definition}[Operational completeness]\label{def:opcomp}
Let \(\mathcal{A}\) be an \emph{operational arena}—a specified set of joint preparations and instruments on systems \(A\) and \(B\). A theory is \emph{operationally complete (relative to \(\mathcal{A}\))} if, for every assemblage \(\Assemblage\) generated within \(\mathcal{A}\) that satisfies one-sided no-signalling, the theory provides a representation of the conditional states and correct predictions for all associated measurement statistics. We assume \(\mathcal{A}\) is closed under classical mixing and classical post-processing.
\end{definition}

\begin{proposition}[Closure and scope; counterexample]\label{prop:closure_scope}
(i) If \(\mathcal{A}\) is closed under classical mixing and post-processing, then the set of \(\mathcal{A}\)-reachable assemblages is convex and stable under deterministic coarse-grainings on Alice's side. (ii) Nonetheless, there exist one-sided no-signalling assemblages that are not quantum-realizable. A sharp example is a PR-box--type one-sided assemblage \cite{PopescuRohrlich1994} (specified operationally in Appendix~\ref{app:prbox}): it is one-sided no-signalling yet yields CHSH value \(4\), exceeding the Tsirelson bound \(2\sqrt{2}\).
\end{proposition}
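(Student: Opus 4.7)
The plan is to treat (i) and (ii) separately; both reduce to short arguments once the operational primitives in $\mathcal{A}$ are unpacked. For convexity in (i), I would start from two protocols in $\mathcal{A}$ producing assemblages $\{\sigma_{a|x}^{(1)}\}$ and $\{\sigma_{a|x}^{(2)}\}$ and observe that flipping a biased coin with weight $t\in[0,1]$ and executing the selected protocol is a classical mixing operation, hence belongs to $\mathcal{A}$ by assumption; the resulting assemblage is $\{t\sigma_{a|x}^{(1)}+(1-t)\sigma_{a|x}^{(2)}\}$, and its marginal $t\rho_B^{(1)}+(1-t)\rho_B^{(2)}$ is independent of $x$, so one-sided no-signalling is preserved. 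For stability under a deterministic coarse-graining $f\colon a\mapsto a'$ on Alice's side, I would implement $f$ as classical post-processing of Alice's recorded label: running the original instrument and then deterministically relabelling is itself in $\mathcal{A}$, and the resulting assemblage $\sigma'_{a'|x}=\sum_{a:\,f(a)=a'}\sigma_{a|x}$ inherits no-signalling by summing the original marginal.

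For (ii), I would invoke the explicit construction deferred to Appendix~\ref{app:prbox} and carry out two checks. The first is that one-sided no-signalling from Alice to Bob holds: the stipulated marginal $\sum_a \sigma_{a|x}$ is, by construction, independent of $x$. The second is quantum unrealizability, argued by contradiction. Suppose the assemblage admits a quantum realization with some $\rho_{AB}$, Alice's POVMs $\{M_{a|x}\}$, and Bob's POVMs $\{E_{b|y}\}$ used to score CHSH, so that the joint statistics factor as $p(a,b|x,y)=\tr[(M_{a|x}\otimes E_{b|y})\rho_{AB}]=\tr[E_{b|y}\sigma_{a|x}]$. Inserted into the CHSH functional for the two binary settings on each side, Tsirelson's bound forces the value $\le 2\sqrt{2}$, contradicting the PR-box value $4$ achieved by the constructed assemblage against the same Bob-side effects.

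The main obstacle is the second check in (ii). The clean appeal to Tsirelson hides a compatibility point: the bound applies only when Bob's effects are genuine quantum POVMs, whereas the PR-box--type assemblage is most naturally built in a GPT where Bob's effects are abstract. I expect the appendix to handle this by specifying Bob's measurements at the operational level (two binary tests) and then observing that any putative quantum realization must assign to them POVMs reproducing the same outcome probabilities on each $\sigma_{a|x}$; Tsirelson's bound then applies to those POVMs and closes the contradiction. Part (i), by contrast, is essentially bookkeeping, contingent only on the phrase ``closed under classical mixing and classical post-processing'' in \cref{def:opcomp} being strong enough to license both the coin-flip step and the outcome-relabelling step without further hypotheses.
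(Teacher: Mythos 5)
Your proposal is correct and follows essentially the same route as the paper: part (ii) is exactly the content of Appendix~\ref{app:prbox} (explicit PR-box assemblage, verification that \(\sum_a\sigma_{a|x}\) is uniform hence \(x\)-independent, CHSH value \(4>2\sqrt{2}\) via Tsirelson), and your remark about needing Bob's abstract GPT effects to be matched by genuine POVMs in any putative quantum realization is precisely the point the paper glosses with ``in this GPT sense.'' The paper gives no explicit argument for part (i), but your coin-flip and relabelling bookkeeping is the intended (and correct) justification.
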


When \(p(a|x)=0\), the normalized conditional state \(\omega_{B|a,x}\) is formally undefined. Expressions involving \(e_{b|y}(\omega_{B|a,x})\) are understood to apply only to contexts with \(p(a|x)>0\). The unnormalized state \(\sigma_{a|x}\) remains well-defined in all cases.

\begin{definition}[Context-free value assignment]\label{def:value}
A \emph{context-free value assignment} is a deterministic map \(v:\{e_{b|y}\}\to\{0,1\}\) such that for each POVM \(\{e_{b|y}\}_b\) at fixed \(y\),
\begin{equation}
\sum_b v(e_{b|y}) = 1, \qquad
v\!\left(\sum_{b \in B} e_{b|y}\right) = \sum_{b \in B} v(e_{b|y}). \label{eq:value_consistency}
\end{equation}
That is, exactly one effect per POVM is assigned value \(1\), the rest \(0\), and this extends consistently to coarse-grained effects.
\end{definition}

\begin{remark}[What the EPR ``predetermined values'' move actually assumes]
EPR-style ``simultaneous elements of reality'' do \emph{not} require a single fixed value map \(v\) that is the same in every run.
Rather, they amount to assuming that \emph{in each run} there exists some underlying state (hidden variable) \(\lambda\) that determines outcomes for \emph{all} of Bob's relevant tests, while Alice's choice/outcome merely conditions which \(\lambda\)'s are compatible with the run.
In this note we make this precise in a \emph{strong (deterministic)} form, matching the usual ``predetermined outcomes'' reading under perfect predictability; see \cref{def:PV}.
\end{remark}

\begin{definition}[Predetermined-values (PV) model for Bob (strong/deterministic)]\label{def:PV}
Fix a family of Bob tests \(\mathcal{Y}\) (e.g.\ two incompatible observables).
An assemblage \(\Assemblage\) admits a \emph{PV model for Bob on \(\mathcal{Y}\)} if it admits an LHS decomposition \eqref{eq:LHS} in which, for each hidden index \(\lambda\) and each test \(y\in\mathcal{Y}\), there is a \emph{deterministic} outcome \(b_y(\lambda)\) such that
\begin{equation}
e_{b_y(\lambda)|y}(\omega_\lambda)=1
\qquad\text{and hence}\qquad
e_{b|y}(\omega_\lambda)\in\{0,1\}\ \ \forall b.
\label{eq:PV_determinism}
\end{equation}
In words: each \(\lambda\) carries simultaneous, context-independent predetermined outcomes for all tests in \(\mathcal{Y}\), while Alice's context \((x,a)\) only changes the \emph{classical} weights over \(\lambda\) via Bayes conditioning through \(p(a|x,\lambda)\).
\end{definition}

% ---------- Main results ----------
\section{Main results}\label{sec:main}

\paragraph{Assumptions.}
We assume: (i) one-sided operational no-signalling from A to B; (ii) well-defined conditional states \(\omega_{B|a,x}\) for all instruments on A; and (iii) operational completeness within a fixed arena \(\mathcal{A}\).
In the quantum case, measurement effects are POVM elements and Alice’s instruments are taken to be \emph{local} completely positive, trace-nonincreasing maps acting on \(A\) alone (not involving \(B\)); state updates need not be of L\"uders type.
Under measurement independence, parameter independence on Bob’s side implies one-sided no-signalling when averaged over hidden variables.

\begin{lemma}[EPR predetermined-values assumption \(\Rightarrow\) deterministic LHS]\label{lem:parent}
If an assemblage \(\Assemblage\) admits a PV model for Bob on a family of tests \(\mathcal{Y}\) (Definition~\ref{def:PV}),
then \(\Assemblage\) is unsteerable on \(\mathcal{Y}\), i.e.\ it admits an LHS model reproducing Bob's statistics for all \(y\in\mathcal{Y}\).
Consequently, any assemblage that is steerable on \(\mathcal{Y}\) admits \emph{no} PV model on \(\mathcal{Y}\).
\end{lemma}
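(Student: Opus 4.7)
The plan is to observe that a PV model is, by construction, an LHS model with the extra requirement of determinism on Bob's side, so the forward implication reduces to unpacking \cref{def:PV} against \cref{def:LHS} and checking that the resulting statistics agree. The contrapositive then delivers the second claim. The whole argument is essentially tautological, but it is worth writing out because it isolates exactly what the deterministic assumption adds to a plain LHS decomposition.

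First I would fix a PV decomposition $\sigma_{a|x} = \sum_\lambda p(\lambda)\,p(a|x,\lambda)\,\omega_\lambda$ with $e_{b_y(\lambda)|y}(\omega_\lambda) = 1$ for every $\lambda$ and every $y \in \mathcal{Y}$. Since each $e_{b|y}(\omega_\lambda) \in [0,1]$ and $\sum_b e_{b|y}(\omega_\lambda) = 1$ by POVM normalization, this forces $e_{b|y}(\omega_\lambda) = 0$ for $b \ne b_y(\lambda)$. The identity is already an LHS decomposition in the sense of \cref{def:LHS}, with no modification needed, so the assemblage is LHS on the level of its conditional states.

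Next I would verify that Bob's statistics for every $y \in \mathcal{Y}$ are reproduced. By linearity of the effect functionals,
\[
 e_{b|y}(\sigma_{a|x}) \;=\; \sum_\lambda p(\lambda)\,p(a|x,\lambda)\,e_{b|y}(\omega_\lambda) \;=\; \sum_\lambda p(\lambda)\,p(a|x,\lambda)\,\mathbf{1}\!\left[b = b_y(\lambda)\right],
\]
which is precisely the joint distribution generated by a classical hidden-variable model that draws $\lambda \sim p(\lambda)$, outputs $a$ with probability $p(a|x,\lambda)$ on Alice's side, and returns the deterministic response $b_y(\lambda)$ on Bob's side. Bayes conditioning on any context $(x,a)$ with $p(a|x) > 0$ then recovers $e_{b|y}(\omega_{B|a,x})$, confirming operational unsteerability on $\mathcal{Y}$. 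The ``consequently'' clause is the contrapositive: a steerable assemblage, by definition, admits no such LHS model, and therefore cannot admit a PV model either.

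The main subtlety is not a calculational one but a foundational caveat, flagged before \cref{def:assemblage}: if some effect has operator norm $1$ without a genuine $1$-eigenvector (continuous spectrum at $1$), then no normal state $\omega_\lambda$ can satisfy $e_{b_y(\lambda)|y}(\omega_\lambda) = 1$, and the PV hypothesis becomes vacuous on such a test. I would note this once explicitly to keep the statement honest in the infinite-dimensional setting; in all other cases (finite-dimensional POVMs, or sharp projective measurements with a genuine $1$-eigenspace) the computation above goes through unchanged and the lemma holds as stated.
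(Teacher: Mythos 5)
Your proof is correct and follows essentially the same route as the paper's: a PV model is by definition an LHS decomposition with an added determinism condition, so unsteerability on \(\mathcal{Y}\) is immediate and the contrapositive gives the second claim. Your explicit verification of the statistics and the remark about effects with norm \(1\) but no \(1\)-eigenvector are welcome elaborations, but the underlying argument is the one the paper gives.
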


\begin{proof}
By definition, a PV model includes an LHS decomposition \eqref{eq:LHS}; the additional condition \eqref{eq:PV_determinism} makes Bob's responses on \(\mathcal{Y}\) deterministic at the hidden-state level.
Thus PV is a special case of (unsteerable) LHS structure on \(\mathcal{Y}\).
The contrapositive gives the final statement.
\end{proof}

\begin{lemma}[Partial-trace push-through identity]\label{lem:pushthrough}
For any operator \(K\) on Alice’s system and any bipartite state \(\rho_{AB}\),
\begin{equation}
\tr_A\!\big[(K\!\otimes\! I)\,\rho_{AB}\,(K^\dagger\!\otimes\! I)\big]
= \tr_A\!\big[(K^\dagger K\!\otimes\! I)\,\rho_{AB}\big]
= \tr_A\!\big[\rho_{AB}\,(K^\dagger K\!\otimes\! I)\big]. \label{eq:pushthrough}
\end{equation}
\end{lemma}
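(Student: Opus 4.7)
The lemma is a bipartite version of the cyclic property of the trace. My plan is to reduce both equalities to a single auxiliary fact: for any operator $X$ acting on Alice's system and any bipartite operator $M_{AB}$,
\begin{equation}
\tr_A\!\big[(X\!\otimes\! I)\,M_{AB}\big] = \tr_A\!\big[M_{AB}\,(X\!\otimes\! I)\big]. \label{eq:aux}
\end{equation}
Once \eqref{eq:aux} is established, the second equality in the statement is just \eqref{eq:aux} with $X=K^\dagger K$, and the first equality follows by taking $M_{AB}=\rho_{AB}(K^\dagger\!\otimes\! I)$ and $X=K$: then $\tr_A[(K\!\otimes\! I)\rho_{AB}(K^\dagger\!\otimes\! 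I)] = \tr_A[\rho_{AB}(K^\dagger\!\otimes\! I)(K\!\otimes\! I)] = \tr_A[\rho_{AB}(K^\dagger K\!\otimes\! I)]$, using $(K^\dagger\!\otimes\! I)(K\!\otimes\! I)=K^\dagger K\!\otimes\! I$.

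\textbf{Proof of \eqref{eq:aux}.} I would work in a product basis $\{|i\rangle_A\}$, so every bipartite operator expands as $M_{AB}=\sum_{i,j}|i\rangle\!\langle j|\otimes B_{ij}$ with $B_{ij}$ acting on $B$. Then
\begin{equation}
\tr_A\!\big[(X\!\otimes\! I)M_{AB}\big] = \sum_{i,j}\tr\!\big[X|i\rangle\!\langle j|\big]\,B_{ij} = \sum_{i,j}\langle j|X|i\rangle\,B_{ij},
\end{equation}
and identically
\begin{equation}
\tr_A\!\big[M_{AB}(X\!\otimes\! I)\big] = \sum_{i,j}\tr\!\big[|i\rangle\!\langle j|X\big]\,B_{ij} = \sum_{i,j}\langle j|X|i\rangle\,B_{ij},
\end{equation}
by the cyclic property of the scalar trace on $A$. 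The two right-hand sides coincide, which proves \eqref{eq:aux}.

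\textbf{Obstacle / caveat.} There is essentially no conceptual difficulty in finite dimensions; the only thing to watch is the infinite-dimensional case, where one needs $\rho_{AB}$ trace-class and $K$ bounded so that the partial trace and the operator products make sense and the basis expansion converges in trace norm. I would add one sentence to that effect (assuming finite-dimensional $\mathcal{H}_A$, or trace-class $\rho_{AB}$ with bounded $K$), so that the cyclic manipulation of $\tr_A$ is justified and the subsequent use of \cref{lem:pushthrough} in the quantum-steering arguments inherits the same standing assumptions already invoked in \cref{sec:operational}.
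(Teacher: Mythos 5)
Your proof is correct, and it is the standard cyclicity-of-the-partial-trace argument; the paper itself states \cref{lem:pushthrough} without proof (using it only to establish Kraus-decomposition independence in \cref{prop:quantum}), so your write-up supplies exactly the computation the paper implicitly relies on. The reduction of both equalities to the single auxiliary identity \(\tr_A[(X\otimes I)M_{AB}]=\tr_A[M_{AB}(X\otimes I)]\), verified in an orthonormal basis of \(\mathcal{H}_A\), is sound, and your caveat about boundedness of \(K\) and trace-class \(\rho_{AB}\) in infinite dimensions is consistent with the paper's own remarks in \cref{sec:operational}.
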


\begin{proposition}[Quantum realization (local-instrument form)]\label{prop:quantum}
Suppose Alice implements a \emph{local} instrument on \(A\) with Kraus operators \(\{K_{a\mu|x}\}_\mu\) such that
\(\sum_\mu K_{a\mu|x}^\dagger K_{a\mu|x} = M_{a|x}\).
For a shared state \(\rho_{AB}\), Bob’s unnormalized conditional state is
\begin{equation}
  \sigma_{a|x}
  = \sum_\mu \tr_A\!\Big[(K_{a\mu|x}\otimes I)\,\rho_{AB}\,(K_{a\mu|x}^\dagger\otimes I)\Big]
  = \tr_A[(M_{a|x}\otimes I)\,\rho_{AB}], \label{eq:quantum_sigma}
\end{equation}
and the normalized state is \(\omega_{B|a,x}=\sigma_{a|x}/p(a|x)\) with \(p(a|x)=\tr[\sigma_{a|x}]\).
By \cref{lem:pushthrough}, \(\sigma_{a|x}\) depends only on the POVM effects \(M_{a|x}\), not on the particular Kraus decomposition \(\{K_{a\mu|x}\}\).
\end{proposition}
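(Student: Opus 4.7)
The plan is to derive the quantum assemblage formula directly from the operational description of Alice's local instrument and then invoke \cref{lem:pushthrough} to eliminate the auxiliary Kraus data. First I would recall that, by definition of a local quantum instrument on $A$, the selective (unnormalized) post-measurement state of the full bipartite system given outcome $a$ at setting $x$ is the completely positive map $\mathcal{I}_{a|x}(\rho_{AB}) = \sum_\mu (K_{a\mu|x}\otimes I)\,\rho_{AB}\,(K_{a\mu|x}^\dagger\otimes I)$. Bob's unnormalized conditional state is obtained by tracing out $A$, which produces the first equality in \eqref{eq:quantum_sigma} by linearity of the partial trace.

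Next I would apply \cref{lem:pushthrough} termwise inside the $\mu$-sum. Because the identity $\tr_A[(K\otimes I)\,\rho_{AB}\,(K^\dagger\otimes I)] = \tr_A[(K^\dagger K\otimes I)\,\rho_{AB}]$ is linear in $K^\dagger K$, the sum over $\mu$ can be pulled inside the partial trace to yield $\tr_A\!\big[\big(\textstyle\sum_\mu K_{a\mu|x}^\dagger K_{a\mu|x}\big)\!\otimes I\,\rho_{AB}\big]$, which equals $\tr_A[(M_{a|x}\otimes I)\rho_{AB}]$ by the POVM normalization hypothesis. Taking the full trace and using $\sum_a M_{a|x}=I_A$ recovers the Born rule $p(a|x)=\tr[(M_{a|x}\otimes I)\rho_{AB}]$, and dividing by it (when positive) defines $\omega_{B|a,x}$; the $p(a|x)=0$ case is covered by the convention already established before \cref{def:value}. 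Kraus-independence is then immediate: the final closed-form expression depends only on $M_{a|x}$, which is itself independent of any particular Kraus decomposition. As a sanity check on the operational framework, summing over $a$ yields $\sum_a \sigma_{a|x} = \tr_A[\rho_{AB}] = \rho_B$, independent of $x$, matching the one-sided no-signalling requirement in \cref{def:assemblage}.

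There is no serious obstacle: once the Kraus form of a local instrument is accepted, the result is essentially a single application of \cref{lem:pushthrough} followed by bookkeeping. The only subtleties worth flagging explicitly in the write-up are (a) the \emph{locality} of Alice's instrument, since any coupling to $B$ would spoil the push-through step and break one-sided no-signalling, and (b) the passage from unnormalized $\sigma_{a|x}$ to normalized $\omega_{B|a,x}$, which is restricted to contexts with $p(a|x)>0$ as noted in the operational framework.
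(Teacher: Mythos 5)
Your proof is correct and follows the same route the paper intends: apply the push-through identity of \cref{lem:pushthrough} termwise in $\mu$, use linearity of the partial trace to sum the Kraus products into $M_{a|x}$, and read off Kraus-independence from the resulting closed form. The added no-signalling sanity check and the explicit handling of the $p(a|x)=0$ convention are consistent with the paper's framework and do not change the argument.
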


\begin{theorem}[No-paradox under RRC and one-sided no-signalling]\label{thm:noparadox}
In any operational theory satisfying one-sided no-signalling and capable of representing conditional states and their statistics, EPR-style perfect predictability implies the existence of context-indexed elements \(\omega_{B|a,x}\) (RRC),
but it does \emph{not} entail the stronger EPR conclusion that Bob possesses \emph{simultaneous, context-free predetermined outcomes} for a family of incompatible tests \(\mathcal{Y}\).
That stronger conclusion corresponds to assuming a PV model on \(\mathcal{Y}\) (Definition~\ref{def:PV}), hence (by Lemma~\ref{lem:parent}) to assuming an (unsteerable) LHS structure on \(\mathcal{Y}\).
\end{theorem}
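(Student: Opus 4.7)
The plan is to prove the theorem in two stages corresponding to its affirmative and negative halves. The affirmative half (perfect predictability $\Rightarrow$ context-indexed element) is essentially a chaining of definitions in our operational setup. The negative half (perfect predictability $\not\Rightarrow$ PV on $\mathcal{Y}$) is the substantive step and I would argue it by reducing PV to an LHS obstruction via \cref{lem:parent} and exhibiting a quantum witness.

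For the affirmative direction I would first unpack what EPR-style perfect predictability means inside the framework of \cref{sec:operational}: there exist $x$ and $a$ with $p(a|x)>0$ together with a Bob test $y$ and outcome $b$ satisfying $e_{b|y}(\omega_{B|a,x})=1$. One-sided no-signalling (\cref{def:assemblage}) guarantees that $\omega_{B|a,x}=\sigma_{a|x}/p(a|x)$ is a legitimate, operationally accessible conditional state, with Bob's marginal $\sum_a\sigma_{a|x}$ independent of $x$ so that labelling by $(x,a)$ is meaningful. Operational completeness on $\mathcal{A}$ (\cref{def:opcomp}) then guarantees that the theory carries this $\omega_{B|a,x}$ and correctly predicts $e_{b|y}(\omega_{B|a,x})=1$. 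This is precisely the RRC (\cref{def:RRC}), so the first clause of the theorem is a direct consequence of the definitions; I expect this to take only a short paragraph.

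For the negative direction I would argue by separation. Suppose, for contradiction or merely for illustration, that perfect predictability forced the stronger EPR conclusion, i.e.\ a PV model on $\mathcal{Y}$ in the sense of \cref{def:PV}. By \cref{lem:parent}, any such PV model yields an LHS decomposition reproducing Bob's statistics for every $y\in\mathcal{Y}$, so the assemblage would be unsteerable on $\mathcal{Y}$. It therefore suffices to exhibit a single assemblage that simultaneously (i) satisfies EPR-style perfect predictability for some context $(x,a)$ and some Bob test, and (ii) is steerable on a relevant $\mathcal{Y}$. The canonical witness is the two-qubit singlet $|\Psi^-\rangle\langle\Psi^-|$: matched-basis Pauli measurements deliver unit-probability predictions for Bob (so RRC applies in each such context), while the assemblage generated by $\mathcal{Y}=\{\sigma_x,\sigma_z\}$ (or $\{\sigma_x,\sigma_y,\sigma_z\}$) violates the CJWR steering inequality referenced in the case study, hence is steerable and so forbids LHS on $\mathcal{Y}$. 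Via \cref{lem:parent} this rules out PV and closes the separation.

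The main obstacle I anticipate is conceptual rather than computational: making it visible to the reader that the gap between RRC and PV is \emph{not} the trivial gap between ``one test'' and ``many tests.'' Even when Alice can certify perfect predictions for \emph{each} $y\in\mathcal{Y}$ individually by matching her setting $x$ to $y$ (as in the singlet), this does not assemble into a joint, context-free, predetermined value map; the obstruction is precisely the steerability of the assemblage, which \cref{lem:parent} converts into the failure of PV. I would therefore phrase the negative half as an implication chain $\text{PV on }\mathcal{Y}\Rightarrow\text{LHS on }\mathcal{Y}\Rightarrow\text{unsteerable on }\mathcal{Y}$, and then invoke the singlet as a quantum-realizable counterexample to unsteerability, all while noting that the underlying assemblage still satisfies RRC context by context. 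This makes transparent that the extra EPR step is a \emph{deterministic LHS assumption}, not a consequence of predictability, which is the intended content of the theorem.
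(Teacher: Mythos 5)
Your proposal is correct and follows essentially the same route as the paper's own proof: the affirmative half by unwinding the definitions to reach the RRC, and the negative half by the chain PV \(\Rightarrow\) LHS \(\Rightarrow\) unsteerable via \cref{lem:parent}, refuted by the singlet with \(\mathcal{Y}=\{\sigma_x,\sigma_z\}\) violating a two-setting CJWR inequality. Your closing remark that the gap is not merely ``one test vs.\ many tests'' but the steerability obstruction is exactly the intended content and matches the paper's emphasis.
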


\begin{proof}
Perfect predictability for a given Bob test \(y\) in a given context \((x,a)\) means there exists an outcome \(b\) such that
\(e_{b|y}(\omega_{B|a,x})=1\).
By the RRC, the operative ``element'' is the \emph{context-indexed} conditional state \(\omega_{B|a,x}\).

The extra EPR step is to claim more than this: namely, that in each run Bob carries simultaneous predetermined outcomes for an entire family \(\mathcal{Y}\) of tests, independently of which measurement Alice chose.
Operationally, that is exactly the content of a PV model on \(\mathcal{Y}\): one posits hidden states \(\omega_\lambda\) whose responses on \(\mathcal{Y}\) are deterministic, and Alice's context \((x,a)\) only updates classical weights over \(\lambda\) via \(p(a|x,\lambda)\); see Definition~\ref{def:PV}.

But PV is \emph{not} implied by perfect predictability alone. Indeed, PV is a special case of LHS structure (Lemma~\ref{lem:parent}).
\emph{Concrete counterexample (quantum):} for the singlet state, if Alice measures either \(\sigma_x\) or \(\sigma_z\) then Bob is steered (in each context \((x,a)\)) to the corresponding Pauli eigenstate, giving perfect predictability for Bob measuring the same Pauli observable, yet the resulting two-setting assemblage violates a two-setting linear steering inequality (e.g.\ CJWR) \cite{CJWR2009} and hence admits no LHS (therefore no PV) model on \(\mathcal{Y}=\{\sigma_x,\sigma_z\}\).

Therefore, whenever an assemblage is steerable on \(\mathcal{Y}\) (i.e.\ admits no LHS model reproducing the statistics for \(\mathcal{Y}\)), no PV model exists.
Hence there is no logical route from ``predictability in some contexts'' to ``simultaneous context-free predetermined values for incompatible tests'' without adding an LHS/PV assumption.
\end{proof}

\begin{theorem}[Incompatibility \(\Leftrightarrow\) steering]\label{thm:incompatibility_steering}
(\textbf{Uola--Moroder--G\"uhne}; \textbf{Quintino--V\'ertesi--Brunner}) \\
With trusted Bob (arbitrary dimension) and general POVMs on Alice’s side, a set of Alice’s measurements is jointly measurable \emph{if and only if} they cannot demonstrate steering—i.e., if and only if all assemblages generated from them admit an LHS model \cite{Uola2014,Quintino2014}.
\end{theorem}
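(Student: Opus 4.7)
The plan is to split the equivalence into two implications, one routine and one substantive. The direction \emph{jointly measurable $\Rightarrow$ no steering} follows from a one-line substitution using \cref{prop:quantum}; the direction \emph{no steering $\Rightarrow$ jointly measurable} requires probing the LHS hypothesis on a single carefully chosen shared state and reading off a parent POVM from the resulting decomposition. This is the standard Uola--Moroder--G\"uhne / Quintino--V\'ertesi--Brunner strategy \cite{Uola2014,Quintino2014}, and I would follow it closely.

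For the easy direction, assume $\{M_{a|x}\}$ is jointly measurable via a parent POVM $\{G_\lambda\}$ and classical post-processings $p(a|x,\lambda)\ge 0$ with $\sum_a p(a|x,\lambda)=1$ and $M_{a|x}=\sum_\lambda p(a|x,\lambda)\,G_\lambda$. Inserting this into the formula of \cref{prop:quantum} and pulling the sum out of the partial trace gives
\begin{equation*}
\sigma_{a|x}=\sum_\lambda p(a|x,\lambda)\,\tr_A\!\big[(G_\lambda\otimes I)\,\rho_{AB}\big].
\end{equation*}
Setting $p(\lambda):=\tr[\tr_A[(G_\lambda\otimes I)\rho_{AB}]]$ and, on its support, $\omega_\lambda := \tr_A[(G_\lambda\otimes I)\rho_{AB}]/p(\lambda)$, one recovers exactly the LHS shape of \cref{def:LHS}, and this holds for \emph{every} shared $\rho_{AB}$. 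Hence no assemblage generated from $\{M_{a|x}\}$ can demonstrate steering on any Bob system.

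For the substantive direction, assume every assemblage generated by $\{M_{a|x}\}$ admits an LHS model and take Alice's Hilbert space to be $d$-dimensional. Probe the hypothesis on the normalized maximally entangled state $\rho_{AB}=\tfrac{1}{d}|\phi^+\rangle\langle\phi^+|$ with $|\phi^+\rangle=\sum_i|i\rangle_A|i\rangle_B$ and $B$ a fixed copy of $A$; a direct computation yields $\sigma_{a|x}=M_{a|x}^T/d$, transposition being in the chosen basis. The LHS hypothesis then supplies $p(\lambda)$, post-processings $p(a|x,\lambda)$, and normalized states $\omega_\lambda$ with $M_{a|x}^T/d=\sum_\lambda p(\lambda)\,p(a|x,\lambda)\,\omega_\lambda$. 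Defining $G_\lambda := d\,p(\lambda)\,\omega_\lambda^T$ and transposing back produces $M_{a|x}=\sum_\lambda p(a|x,\lambda)\,G_\lambda$, with $G_\lambda\ge 0$ (transposition preserves positivity) and $\sum_\lambda G_\lambda = d\,\rho_B^T = I$ (since $\rho_B=I/d$ and $\sum_a p(a|x,\lambda)=1$). Thus $\{G_\lambda\}$ is a bona fide parent POVM and $\{M_{a|x}\}$ is jointly measurable.

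The main obstacles are technical rather than conceptual. First, the LHS index $\lambda$ is not a priori discrete; since Alice has only finitely many settings and outcomes, however, a Carath\'eodory-type argument within the finite-dimensional positive cone reduces the support of $\lambda$ to finitely many atoms, so the $\sum_\lambda$ manipulations above are legitimate. Second, when Alice's Hilbert space is infinite-dimensional, $|\phi^+\rangle$ fails to be normalizable; the standard remedy is to replace it by a faithful bipartite state $\rho_{AB}$ with invertible $\rho_B$ and use the steering-equivalent-observable inversion $\sigma_{a|x}\mapsto \rho_B^{-1/2}\sigma_{a|x}\rho_B^{-1/2}$, which transfers joint measurability between Bob's effective POVMs and Alice's original $\{M_{a|x}\}$ via the faithfulness of $\rho_{AB}$. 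Up to these refinements the argument is exactly the one above, and combining the two implications yields the stated equivalence.
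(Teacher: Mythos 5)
Your proposal is correct and is substantially more complete than what the paper itself provides: the paper's proof is explicitly an ``intuition only'' sketch that defers the full derivation to \cite{Uola2014,Quintino2014}. Your forward direction (parent POVM \(\Rightarrow\) LHS by substituting \(M_{a|x}=\sum_\lambda p(a|x,\lambda)G_\lambda\) into \cref{prop:quantum}) matches the paper's sketch exactly. For the converse, however, you take a genuinely different route from the one the sketch alludes to: the paper gestures at ``convex separation arguments,'' whereas you give the constructive probe on the maximally entangled state, reading off \(\sigma_{a|x}=M_{a|x}^T/d\) and defining \(G_\lambda:=d\,p(\lambda)\,\omega_\lambda^T\); your checks of positivity, normalization \(\sum_\lambda G_\lambda=I\), and the recovery of \(M_{a|x}\) under transposition are all correct. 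The constructive version buys an explicit parent POVM and makes the equivalence checkable by hand, at the price of needing a distinguished faithful state; the separation-based version avoids singling out a state but is non-constructive. Your two flagged technicalities are the right ones, and your handling of them is standard: Carath\'eodory discretization of \(\lambda\) is legitimate because Alice has finitely many settings and outcomes, and the \(\rho_B^{-1/2}\sigma_{a|x}\rho_B^{-1/2}\) steering-equivalent-observables map is the accepted fix when \(|\phi^+\rangle\) is unavailable. One small point worth making explicit: the theorem's ``arbitrary dimension'' refers to trusted Bob, while the maximally-entangled-state trick requires Alice's space to be finite-dimensional (or replaced by the faithful-state variant you describe), so your final remark is not optional polish but part of matching the stated generality.
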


\begin{proof}[Proof sketch]
(Intuition only.) Joint measurability provides a parent POVM and classical post-processings; inserting this structure into the assemblage definition yields an LHS model. Conversely, if no assemblage violates the LHS condition, one can reconstruct a parent POVM using convex separation arguments. Full derivations appear in \cite{Uola2014,Quintino2014}.
\end{proof}

% ---------- Worked qubit example ----------
\section{Worked qubit example: singlet, CJWR steering, and noise thresholds}\label{sec:qubit}
Consider the Werner state \(\rho_W = p\,|\Psi^-\rangle\!\langle\Psi^-| + (1-p)\,I_4/4\), where \(p\) is the mixing parameter. Let Alice choose either \(m=2\) or \(3\) Pauli settings, and let Bob measure along the corresponding directions. Define
\begin{equation}
  S_m := \frac{1}{m} \sum_{x=1}^m \langle A_x B_x \rangle, \qquad \text{(our normalization)} \label{eq:Sm}
\end{equation}
and compare with the CJWR functional \cite{CJWR2009}:
\begin{equation}
  F_m^{\rm CJWR} := \frac{1}{\sqrt{m}} \Bigl| \sum_{x=1}^m \langle A_x B_x \rangle \Bigr|
  = \sqrt{m}\,|S_m| \le 1 \quad (\text{LHS bound}). \label{eq:CJWR_relation}
\end{equation}
So the LHS bound in our normalization is
\begin{equation}
  |S_m| \le \frac{1}{\sqrt{m}}. \label{eq:Sm_LHS_correct}
\end{equation}
For the Werner state with matched settings, \(\langle A_x B_x \rangle = -p\), hence \(|S_m|=p\). Thus this \emph{specific CJWR witness with matched settings} is violated when
\begin{equation}
  p > \frac{1}{\sqrt{m}}
  \quad \Rightarrow \quad
  p > \frac{1}{\sqrt{2}}~(m=2),\qquad
  p > \frac{1}{\sqrt{3}}~(m=3). \label{eq:werner_thresholds_correct}
\end{equation}
For \(m=2\), this threshold coincides \emph{numerically} with the usual CHSH nonlocality threshold for Werner states (even though the CJWR functional is not the CHSH expression). For \(m=3\), the threshold drops to about \(0.577\), illustrating the strict inclusion \(\text{Bell} \subset \text{Steering} \subset \text{Entanglement}\).

% ---------- Continuous variables example ----------
\section{Continuous variables: finite-squeezing Reid criterion}\label{sec:cv}
We use units with \(\hbar=1\). Consider a two-mode squeezed vacuum with squeezing parameter \(r\). For a jointly Gaussian state, the optimal \emph{linear} estimator gives
\begin{equation}
  \Delta^2_{\mathrm{inf}} x_B = \Delta^2_{\mathrm{inf}} p_B
  = \frac{1}{2}\,\operatorname{sech}(2r) = \frac{1}{2\cosh(2r)}. \label{eq:cv_var}
\end{equation}
Reid’s steering criterion is
\begin{equation}
  \Delta_{\mathrm{inf}} x_B\,\Delta_{\mathrm{inf}} p_B < \frac{1}{2}
  \quad \Leftrightarrow \quad
  \Delta^2_{\mathrm{inf}} x_B\,\Delta^2_{\mathrm{inf}} p_B < \frac{1}{4}. \label{eq:reid}
\end{equation}
For a representative finite squeezing \(r=0.69\) (about 6~dB), one finds \(\Delta_{\mathrm{inf}} x_B = \Delta_{\mathrm{inf}} p_B \approx 0.486\), hence
\begin{equation}
  \Delta_{\mathrm{inf}} x_B\,\Delta_{\mathrm{inf}} p_B \approx 0.237 < \tfrac{1}{2}, \label{eq:reid_number}
\end{equation}
certifying continuous-variable EPR steering. While the linear estimator is optimal for Gaussian states, it also upper-bounds the inferred variances in many non-Gaussian scenarios.

% ---------- Experimental diagnostics (brief) ----------
\section{Experimental diagnostics (brief)}
In the trusted-Bob scenario, measurement incompatibility on Alice’s side is necessary and sufficient for steering with some shared state (\cref{thm:incompatibility_steering}). Steering inequalities such as CJWR (\cref{sec:qubit}) and Reid’s criterion (\cref{sec:cv}) provide practical, asymmetric tests for locating an experiment relative to the LHS boundary.

% ---------- Relation to other accounts ----------
\section{Relation to other accounts}
\begin{table*}[t]
\caption{Rosetta stone for various responses to the EPR dilemma. Column~1 lists operational no-signalling assumptions, while the final column outlines how each account neutralizes the apparent paradox.}
\label{tab:accounts}
\begin{ruledtabular}
\begin{tabular}{lcccccl}
Account & No-signalling (operational) & Ontological mechanism & Single outcomes & Context-free joint values & \(\psi\) status & Resolution \\
\hline
Bohr (complementarity) & \(\checkmark\) & Phenomena/contexts & \(\checkmark\) & No & Phenomena-centric (operational) & Context defines predicates \\
CSM (Auff\`eves--Grangier) \cite{AuffevesGrangier2016CSM,Grangier2021Entropy}
 & \(\checkmark\) & Context-indexed modalities & \(\checkmark\) & No & Contextual-objective & Replace EPR ``elements'' by modalities/contexts \\
Steering (this work)   & \(\checkmark\) & Standard QM updates & \(\checkmark\) & No & \(\psi\)-agnostic (operational) & Conditional states suffice \\
Bohmian mechanics      & \(\checkmark\) & Nonlocal dynamics & \(\checkmark\) & No (measurement contextual) & \(\psi\)-ontic & Accept nonlocality \\
Everett (Many-Worlds)  & \(\checkmark\) & Local unitary & Branch-relative & n/a & \(\psi\)-ontic & Branch correlations; records are single-outcome in branches \\
GRW/CSL                & \(\checkmark\) & Stochastic collapses & \(\checkmark\) & Mixed & \(\psi\)-ontic & Objective collapses \\
QBism / Relational     & \(\checkmark\) & Agent-/relation-relative & \(\checkmark\) & No & Agent-/relation-relative & Reinterpret probabilities \\
\end{tabular}
\end{ruledtabular}
\end{table*}

% ---------- Appendices ----------
\appendix

\section{Why ``predetermined values'' is an extra assumption (Lemma~\ref{lem:parent})}\label{app:parent_proof}
Lemma~\ref{lem:parent} simply records the correct logical target.
What EPR needs for ``simultaneous elements of reality'' is not a single fixed value map \(v\), but a \emph{per-run} assignment carried by a hidden state \(\lambda\) that determines outcomes for all Bob tests in \(\mathcal{Y}\).
Once expressed operationally, this is exactly the PV condition: an LHS decomposition whose hidden states \(\omega_\lambda\) yield deterministic outcomes on \(\mathcal{Y}\).
Hence, if an assemblage is steerable on \(\mathcal{Y}\), no such PV (and thus no EPR-style predetermined-values conclusion on \(\mathcal{Y}\)) can be drawn.

\section{Expanded proof of the No-Paradox Theorem (Theorem~\ref{thm:noparadox})}\label{app:noparadox_proof}
Perfect predictability supplies, for each context \((x,a)\), a conditional state \(\omega_{B|a,x}\) that yields unit probability for some test on Bob’s side.
The extra EPR claim that Bob carries simultaneous, context-free predetermined outcomes for a family \(\mathcal{Y}\) of tests is precisely the PV assumption: a deterministic LHS refinement on \(\mathcal{Y}\) (Definition~\ref{def:PV}).
Since PV is not implied by predictability and is outright impossible whenever the assemblage is steerable on \(\mathcal{Y}\) (Lemma~\ref{lem:parent}), the EPR paradox dissolves once one keeps track of the additional hidden-structure assumption.

\section{PR-box--type one-sided assemblage (explicit GPT specification)}\label{app:prbox}
Here we specify a one-sided no-signalling assemblage in a \emph{box-world GPT} (non-signalling polytope), where ``states'' are conditional probability tables and ``effects'' read out the corresponding probabilities; see, e.g., \cite{Barrett2007GPT,Plavala2023GPT}. The PR-box correlation itself is due to Popescu and Rohrlich \cite{PopescuRohrlich1994}.

\paragraph{State/effect convention.}
Bob has two measurement settings \(y\in\{0,1\}\) and binary outcomes \(b\in\{0,1\}\).
A (normalized) GPT state for Bob is the table \(\omega=\{p(b|y)\}_{b,y}\).
The effect \(e_{b|y}\) evaluates \(\omega\) by returning \(e_{b|y}(\omega)=p(b|y)\).

\paragraph{Assemblage definition.}
Let Alice have inputs \(x\in\{0,1\}\) and outcomes \(a\in\{0,1\}\), with \(p(a|x)=1/2\).
Define Bob's \emph{conditional} GPT state \(\omega_{B|a,x}\) by the PR-box rule
\begin{equation}
p(b|y; a,x) \;=\; \delta_{\,b,\; a\oplus xy},
\label{eq:prbox_rule}
\end{equation}
where \(\oplus\) is addition mod 2.
Equivalently:
for \(y=0\), Bob outputs \(b=a\) always; for \(y=1\), Bob outputs \(b=a\) if \(x=0\) and \(b=1-a\) if \(x=1\).
The subnormalized assemblage elements are
\begin{equation}
\sigma_{a|x} \;:=\; \tfrac12\,\omega_{B|a,x}.
\label{eq:prbox_sigma}
\end{equation}

\paragraph{One-sided no-signalling.}
Summing over \(a\) gives Bob's marginal (subnormalized) state
\begin{equation}
\sum_a \sigma_{a|x} \;=\; \tfrac12\sum_a \omega_{B|a,x}.
\end{equation}
For each fixed \(x\) and \(y\), the two conditional states \(\omega_{B|0,x}\) and \(\omega_{B|1,x}\) assign deterministic but opposite outcomes, so their average is uniform:
\begin{equation}
p(b|y) \;=\; \tfrac12 \quad \text{for all } b,y,
\end{equation}
independent of \(x\). Thus the assemblage is one-sided no-signalling from Alice to Bob.

\paragraph{CHSH value.}
The assemblage induces a bipartite distribution via
\begin{equation}
p(a,b|x,y) \;=\; p(a|x)\, e_{b|y}\!\big(\omega_{B|a,x}\big).
\label{eq:prbox_joint_from_assemblage}
\end{equation}
Define \(\pm 1\) observables \(A_x:=(-1)^a\) and \(B_y:=(-1)^b\).
From \eqref{eq:prbox_rule} we have \(b=a\oplus xy\), hence
\[
A_x B_y = (-1)^a\,(-1)^{a\oplus xy}=(-1)^{xy}.
\]
Therefore the correlators are \(E_{xy}=\langle A_x B_y\rangle = (-1)^{xy}\), and the usual CHSH combination is
\begin{equation}
\mathrm{CHSH} \;=\; E_{00}+E_{01}+E_{10}-E_{11}
\;=\; 1+1+1-(-1)\;=\;4,
\end{equation}
the algebraic maximum.
This exceeds the quantum Tsirelson bound \(2\sqrt{2}\), so the assemblage is not quantum-realizable even though it is one-sided no-signalling in this GPT sense.

\section{CJWR functional scaling (for readers cross-checking)}\label{app:cjwr_scaling}
The functional \(S_m\) introduced in \cref{eq:Sm} connects to the CJWR form via:
\begin{equation}
  F_m^{\rm CJWR} = \frac{1}{\sqrt{m}} \Bigl| \sum_{x=1}^m \langle A_x B_x \rangle \Bigr| = \sqrt{m} |S_m|.
\end{equation}
Thus, the LHS bound \(F_m^{\rm CJWR} \le 1\) directly translates to \(|S_m| \le 1/\sqrt{m}\), matching the thresholds derived in \cref{eq:Sm_LHS_correct} and \cref{eq:werner_thresholds_correct} for Werner states with aligned Pauli settings.

% ---------- Bibliography ----------
\bibliographystyle{apsrev4-2}
\bibliography{refs}

\end{document}